\newtheoremstyle{theoremdd}% name of the style to be used
  {\topsep}% measure of space to leave above the theorem. E.g.: 3pt
  {\topsep}% measure of space to leave below the theorem. E.g.: 3pt
  {\itshape}% name of font to use in the body of the theorem
  {0pt}% measure of space to indent
  {\bfseries}% name of head font
  {.}% punctuation between head and body
  { }% space after theorem head; " " = normal interword space
  {\thmname{#1}\thmnumber{ #2}\textnormal{\thmnote{ (#3)}}}
\theoremstyle{theoremdd}
\renewenvironment{proof}[1][\proofname]{\par
\pushQED{\qed}%
\normalfont \topsep6\p@\@plus6\p@\relax
\trivlist
\item\relax
{\itshape
#1\@addpunct{.}}\hspace\labelsep\ignorespaces
}{%
\popQED\endtrivlist\@endpefalse
}
\newcommand{\argmax}{\mathop{\rm arg~max}\limits}
\newcommand{\bd}{\mathrm{bdiag}}
\newcommand{\tr}{\mathrm{tr}}
\newcommand{\Ber}{\mathrm{Bernoulli}}
\newtheorem{definition}{Definition}
\newtheorem{theorem}{Theorem}
\newtheorem{corollary}{Corollary}
\newcommand{\LRT}[2]{%
  \mathrel{\mathop\gtrless\limits^{#1}_{#2}}%
}
\begin{document}

% \title{Distributed AMP via Likelihood Ratio Fusion: Application to Activity Detection in D-MIMO}
\title{Activity Detection in Distributed MIMO:\\ Distributed AMP via Likelihood Ratio Fusion}

\author{Jianan~Bai~and~Erik~G.~Larsson
	\thanks{
	The authors are with the Department of Electrical Engineering (ISY), Link\"oping University, 58183 Link\"oping, Sweden (email: jianan.bai@liu.se, erik.g.larsson@liu.se). This work was supported in part by ELLIIT,  the KAW foundation, and  the European Union’s Horizon 2020 research and innovation program under grant agreement no.~101013425 (REINDEER).
	The computations were enabled by resources provided by the Swedish National Infrastructure for Computing (SNIC), partially funded by the Swedish Research Council through grant agreement no.~2018-05973.
}}
	
\maketitle

\begin{abstract}
We develop a new algorithm for activity detection for grant-free multiple access in distributed multiple-input multiple-output (MIMO).
The algorithm is a distributed version of the approximate message passing (AMP) based on a soft combination of likelihood ratios computed independently at multiple access points.
The underpinning theoretical basis of our algorithm is a new observation that we made about the state evolution in the AMP.
Specifically, with a minimum mean-square error denoiser, the state maintains a block-diagonal structure whenever the covariance matrices of the signals have such a structure. 
We show by numerical examples that the algorithm outperforms competing schemes from the literature.
\end{abstract}

\begin{IEEEkeywords}
    Distributed AMP, activity detection, distributed MIMO.
\end{IEEEkeywords}

\section{Introduction}

\IEEEPARstart{L}{imitless} connectivity is envisioned to be one of the key features in next-generation wireless networks. 
Distributed multiple-input multiple-output (MIMO), also known as  cell-free massive MIMO, is a promising technology to achieve this goal~\cite{cell-free}. 
To support massive connectivity and low latency, grant-free multiple access (GFMA) has been proposed to reduce signaling compared to grant-based access.

In GFMA, an access point (AP) needs to identify the active users and estimate their channels based on received pilots. 
Due to the massive number of devices and the limited coherence block size, assigning mutually orthogonal pilot sequences to all devices becomes impractical. 
The resulting non-orthogonality of the pilots makes the problem of joint activity detection and channel estimation (JADCE)  challenging.
In typical application scenarios, devices only sporadically access the network. This enables JADCE in GFMA  to be cast as a compressed sensing (CS) problem~\cite{liu2018massive}.

For co-located MIMO, GFMA with JADCE has been extensively studied. Specifically, the approximate message passing (AMP) \cite{donoho2009message}, a low-complexity iterative algorithm for CS, has been successfully used \cite{liu2018massive,8961111}. 
An alternative approach is to make a prior assumption on the fading statistics (typically Rayleigh) and find the maximum-likelihood estimates of the signal strengths, using, for example,  coordinate descent~\cite{fengler2021non}. This approach, called ``covariance-based'' in~\cite{fengler2021non} and herein, outperforms AMP for activity detection in co-located MIMO when the number of antennas is large.

For distributed MIMO, several algorithms also exist.  
In particular, the AMP can be applied with an assumption of spatially correlated channels with a block-diagonal covariance that reflects that the APs are geographically distributed \cite{9723198}. 
Extensions of the covariance-based approach to distributed MIMO are also possible, although not uncomplicated. A naive implementation requires solving polynomial equations whose order is proportional to the number of APs. In~\cite{ganesan2021clustering}, the authors proposed a  covariance-based approach with clustering to effectively combine signals from only a subset of APs. 
Eventually, however, all these approaches \cite{9723198,ganesan2021clustering} require all APs to send their received pilot signals to a central processing unit (CPU) and thus incur a high fronthaul load; additionally, they require fully centralized processing of the signals.

We are only aware of one existing paper that attempts to construct a distributed implementation of the AMP for JADCE in GFMA in distributed MIMO: \cite{guo2021joint}. This paper proposed to let each AP run the AMP algorithm to make hard decisions on device activities and then send these decisions to a CPU for fusion. There are also distributed versions of the AMP, developed for more general use cases: \cite{7032167} proposed an iterative solution that requires a two-round communication between the computing nodes and the central node (APs and CPU in our application) and \cite{8462333} that applied a consensus propagation protocol. These  algorithms, however, are inapplicable to GFMA. 

\textbf{Technical contribution:} We develop a distributed version of  AMP and apply it to JADCE for  GFMA in  distributed MIMO. 
By examining the AMP state evolution, we show that when the channels are uncorrelated across different APs, the log-likelihood ratio (LLR) of the device activity is equal to the sum of local LLRs obtained per AP. This enables each AP to compute a local LLR and send it for soft decision fusion. Our algorithm performs closely to the nominal centralized AMP, and it yields the channel estimates as a byproduct.

% \subsection{Notations}

% Vectors are denoted by boldface lowercase letters, $\mathbf{x}$, matrices by boldface uppercase letters, $\mathbf{X}$, and sets by calligraphic letters, $\mathcal{X}$, with cardinality $|\mathcal{X}|$. The superscripts $(\cdot)^T$, $(\cdot)^H$, and $(\cdot)^{-1}$ denote transpose, conjugate transpose, and inverse, respectively. Identity matrices are denoted by $\mathbf{I}$, and all-one vector by $\mathbf{0}$. $\mathbb{E}[\cdot]$ denotes the expectation operator. $\mathbb{C}$ denotes the space of complex numbers. The multivariate circularly symmetric complex Gaussian distribution with covariance matrix $\mathbf{R}$ is denoted by $\mathcal{CN}(\mathbf{0},\mathbf{R})$. $\mathbf{D_x}$ denotes a diagonal matrix with diagonal vector $\mathbf{x}$.

\section{System Model and Power Allocation}

We consider the uplink of a distributed MIMO system, where $K$ APs jointly serve $N$ single-antenna devices. 
Each AP has $M$ receive antennas, and the total number of antennas in the system is denoted by $M_\textup{tot}=KM$. 
Each device, $n\in\mathcal{N}$, is pre-allocated a pilot sequence $\bm{\phi}_n=[\phi_{1n},\cdots,\phi_{Ln}]^T\in\mathbb{C}^L$ with unit energy, i.e., $\|\bm{\phi}_n \|^2=1$. 
In each time slot, the activity of device $n$ is modeled by a binary random variable, $a_n\sim\Ber(\epsilon_n)$. 

%\subsection{Signal Model}

The received signal, $\mathbf{Y}_k\in\mathbb{C}^{L\times M}$, at the $k$-th AP can be expressed as
\begin{equation}
\label{eq: system model}
    \mathbf{Y}_k = \sum_{n\in\mathcal{N}}\sqrt{Lp_n}a_n \bm{\phi}_n\tilde{\mathbf{h}}_{kn}^T + \mathbf{W}_k,
\end{equation}
where $p_n\in[0,p_{\max}]$ is the transmit power of device $n$. The channel between AP $k$ and device $n$ is modeled by $\tilde{\mathbf{h}}_{kn}\sim \mathcal{CN}(\mathbf{0},\tilde{\mathbf{R}}_{kn})$, where
$\tilde{\mathbf{R}}_{kn}\in\mathbb{C}^{M\times M}$ is the spatial correlation matrix, and
$\beta_{kn}=\tr(\tilde{\mathbf{R}}_{kn})/M$ can be interpreted as the large-scale fading coefficient (LSFC). The channel is assumed to be uncorrelated between different APs and devices. The noise matrix $\mathbf{W}_k\in\mathbb{C}^{L\times M}$ has i.i.d. entries with  $\mathcal{CN}(0,\sigma^2)$ elements, where $\sigma^2$ is the noise variance.

For brevity of notation, we define the effective channel ${\mathbf{h}}_{kn} \triangleq \sqrt{Lp_n}\tilde{\mathbf{h}}_{kn}$, which has the distribution $\mathcal{CN}(\mathbf{0},{\mathbf{R}}_{kn})$, where ${\mathbf{R}}_{kn} = Lp_n\tilde{\mathbf{R}}_{kn}$, and $\rho_{kn} = Lp_n\beta_{kn}$ can be interpreted as the received signal strength of device $n$ at AP $k$.

Denoting  the pilot matrix by $\bm{\Phi}=[\bm{\phi}_1,\cdots,\bm{\phi}_N]$,  the effective channel matrix by ${\mathbf{H}}_k=[{\mathbf{h}}_{k1},\cdots,{\mathbf{h}}_{kn}]^T$, and  the vector of device activities by $\mathbf{a}=[a_1,\cdots,a_N]^T$, the received signal model in \eqref{eq: system model} can be written as
\begin{equation}
    \mathbf{Y}_k = \bm{\Phi}\mathbf{D_a}{\mathbf{H}}_k + \mathbf{W}_k.
\end{equation}
By combining the received signal at all APs, we obtain 
\begin{equation}
\label{eq: system model all APs}
    \mathbf{Y} = \bm{\Phi}\mathbf{D_a}\underbrace{[{\mathbf{H}}_1,\cdots,{\mathbf{H}}_K]}_{\triangleq{\mathbf{H}}} + \underbrace{[\mathbf{W}_1,\cdots,\mathbf{W}_K]}_{\triangleq\mathbf{W}},
\end{equation}
where $\mathbf{H}=[\mathbf{h}_1,\cdots,\mathbf{h}_N]^T$ and $\mathbf{h}_n = [\mathbf{h}_{1n}^T,\cdots,\mathbf{h}_{Kn}^T]^T$ is the channel from device $n$ to all APs. Note that by assuming uncorrelated fading across different APs, $\mathbf{h}_n$ has the distribution $\mathcal{CN}(\mathbf{0},\mathbf{R}_n)$, where $\mathbf{R}_n$ is block-diagonal: $\mathbf{R}_n=\bd(\mathbf{R}_{1n},\cdots,\mathbf{R}_{Kn})$.

\subsection{Power Allocation}

In distributed MIMO, since the APs are spread out, the channel gains from a device to different APs vary significantly. 
The signal strength from a device is generally larger at  APs that are physically close to the device than at other APs. 

We propose a user-centric power allocation scheme that comes in a few different variations. The details are as follows:
\begin{itemize}
    \item[1)] Each device $n$ is associated with the subset of APs, say $\mathcal{K}_n^\textup{p}$, for which the LSFCs exceed a threshold $\beta_n^\textup{th}$:
    \begin{equation}
        \mathcal{K}_n^\textup{p} = \{k\in\mathcal{K}: \beta_{kn} > \beta_n^\textup{th} \}.
    \end{equation}
    If no AP satisfies this requirement, we associate the device to the AP with the largest LSFC, i.e.,
    \begin{equation}
    \label{eq: clustering PA}
        \overline{\mathcal{K}}_n^\textup{p} = \mathcal{K}_n^\textup{p} \cup \{\argmax\nolimits_{k\in\mathcal{K}} \beta_{kn} \}.
    \end{equation}
    \item[2)] For each device, a coefficient $s_n$ is calculated. We consider the three different choices:
    \begin{equation}
    \label{eq: power allocation}
        s_n = \left\{ 
        \begin{array}{cl}
            1, & \textit{FullPower} \\
            \max\nolimits_{k\in\overline{\mathcal{K}}_n^\textup{p}} \beta_{kn}, & \textit{MasterAP} \\
            \frac{1}{|\overline{\mathcal{K}}_n^\textup{p} |}\sum\nolimits_{k\in\overline{\mathcal{K}}_n^\textup{p}} \beta_{kn}, & \textit{AvgAP}
        \end{array}
        \right..
    \end{equation}
    \item[3)] For each device, the transmit power is set to 
    \begin{equation}
        p_n = \min\left\{ {s_{\min}}/{s_n}, 1 \right\}  p_{\max},
    \end{equation}
    where $s_{\min} = \min_{n':|\mathcal{K}_n^\textup{p}|\geq 1} s_{n'}$ is the minimum coefficient among all devices for which  at least one AP satisfies the LSFC requirement, i.e., $\beta_{kn}>\beta_n^\textup{th}$.
    
\end{itemize}

\section{Activity Detection in Distributed MIMO}

The system model in \eqref{eq: system model all APs} is an instance of the linear measurement model 
$ \mathbf{Y} = \bm{\Phi}\mathbf{X} + \mathbf{W} $, where the unknown signal matrix $ \mathbf{X}$ is row sparse, and each row $\mathbf{x}_n^T=a_n\mathbf{h}_n^T$ has a Bernoulli-Gaussian distribution.
Therefore, the activity detection becomes a support recovery problem in CS, which can be solved using the AMP algorithm.

\subsection{AMP with MMSE Denoiser and Likelihood-Ratio Test}

By initializing $\mathbf{Z}^{0}=\mathbf{Y}$ and $ \hat{\mathbf{X}}^{0}=\mathbf{O}_{N\times M_\textup{tot}} $, the AMP iteration $ t\in\{0,1,\cdots\} $ for complex-valued signals is \cite{liu2018massive},
\begin{align}\label{key}
	\hat{\mathbf{x}}_n^{t+1} &= \mathbf{g}_t(\underbrace{(\mathbf{Z}^{t})^T\bm{\phi}_n^* + \hat{\mathbf{x}}_n^{t}}_{\triangleq\bm{\xi}_n^t}),\quad\forall n\in\mathcal{N},\\
	\mathbf{Z}^{t+1} &= \mathbf{Y} - \bm{\Phi}\hat{\mathbf{X}}^{t+1} + \frac{1}{L}\mathbf{Z}^{t}\sum_{n\in\mathcal{N}} \mathbf{g}_t'(\bm{\xi}_n^t),
\end{align}
where $ \hat{\mathbf{X}}^t = [\hat{\mathbf{x}}_1^t,\cdots,\hat{\mathbf{x}}_N^t]^T $. Here, $ \mathbf{g}_t(\cdot):\mathbb{C}^{M_\textup{tot}}\rightarrow\mathbb{C}^{M_\textup{tot}} $ is the denoiser and $ \mathbf{g}_t'(\bm{\xi}) $ represents its Jacobian at $ \bm{\xi} $.

As demonstrated in the state evolution analysis~\cite{donoho2009message}, under some mild conditions and in the large-system limit, $\bm{\xi}_n^{t}$ behaves like a Gaussian-noise corrupted version of $\mathbf{x}_n$, i.e.,
\begin{equation}
	\label{eq: distribution of xi}
	\bm{\xi}_n^{t} \sim  \mathbf{x}_n + \mathcal{CN}(\mathbf{0},\bm{\Sigma}^{t}).
\end{equation}
In (\ref{eq: distribution of xi}), $\bm{\Sigma}^{t}$ is referred to as the \textit{state}; this state  evolves by
\begin{equation}
	\label{eq: state evolution}
	\begin{aligned}
		\bm{\Sigma}^{t+1} = \sigma^2\mathbf{I} + \frac{1}{L}\sum_{n\in\mathcal{N}}\mathbb{E}\big[&\big(\mathbf{g}_t(\mathbf{x}_n\! +\! \mathbf{v}^t)-\mathbf{x}_n\big)\\
		&\big(\mathbf{g}_t(\mathbf{x}_n\! +\! \mathbf{v}^t)-\mathbf{x}_n\big)^H \big],
	\end{aligned}
\end{equation}
where $\mathbf{v}^t$ has distribution $\mathcal{CN}(\mathbf{0},\bm{\Sigma}^{t})$ and is independent of $\mathbf{x}_n$, and the expectation is taken over the joint distribution of $\mathbf{x}_n$ and $\mathbf{v}^t$. The initial state is given by
\begin{equation}
	\label{eq: initial state}
	\bm{\Sigma}^{0} = \sigma^2\mathbf{I} + \frac{1}{L}\sum_{n\in\mathcal{N}}\mathbf{R}_n.
\end{equation}

The minimum mean-square error (MMSE) denoiser is given by the MMSE estimate of $\mathbf{x}_n$ given $\bm{\xi}_n^{t}$,
\begin{equation}
	\mathbf{g}_t(\bm{\xi}_n^{t}) = \mathbb{E}[\mathbf{x}_n | \bm{\xi}_n^{t}] = \theta_n^{t}(\bm{\xi}_n^{t}) \cdot \bm{\Psi}_n^{t}\bm{\xi}_n^{t},
	\label{eq: MMSE denoiser}
\end{equation}
where 
{\small
	\begin{equation}
		\theta_n^{t}(\bm{\xi}) = \left({1 + \frac{1\!-\!\epsilon_n}{\epsilon_n}\frac{|\mathbf{R}_n\! +\!  \bm{\Sigma}^{t}|}{|\bm{\Sigma}^{t}|}\exp\big(\!-\!\bm{\xi}^H\bm{\Omega}_n^{t}\bm{\xi} \big)}\!\right)^{-1},
		\label{eq: theta}
	\end{equation}
}
\begin{equation}
	\bm{\Psi}_n^{t} = \mathbf{R}_n(\mathbf{R}_n + \bm{\Sigma}^{t})^{-1},
	\label{eq: Psi}
\end{equation}
\begin{equation}
	\bm{\Omega}_n^{t} = (\bm{\Sigma}^{t})^{-1} - (\mathbf{R}_n + \bm{\Sigma}^{t})^{-1}.
	\label{eq: Omega}
\end{equation}

The support recovery problem is equivalent to the detection of the non-zero entries in the binary vector $\mathbf{a}$.
To determine the value of $a_n$, we consider the binary hypothesis test
\begin{equation}
	\mathcal{H}_0: a_n = 0 \quad\text{and}\quad \mathcal{H}_1: a_n = 1.
\end{equation}
The likelihood-ratio test (LRT) is given by\footnote{For brevity, we henceforth omit the iteration index $t$ in the superscripts. }
\begin{equation}
	\ell_n \triangleq \frac{p(\bm{\xi}_n|a_n=0)}{p(\bm{\xi}_n|a_n=1)} \LRT{\mathcal{H}_0}{\mathcal{H}_1} \gamma,
\end{equation}
where $\gamma>0$ is the decision threshold. According to \eqref{eq: distribution of xi}, the likelihood-ratio can be written as
{\small
	\begin{equation}
		\label{eq: lr}
		\begin{aligned}
			\ell_n = \frac{\mathcal{CN}(\bm{\xi}_n|\mathbf{0},\bm{\Sigma})}{\mathcal{CN}(\bm{\xi}_n|\mathbf{0},\mathbf{R}_n\!+\!\bm{\Sigma})} = \frac{|\mathbf{R}_n\! +\!  \bm{\Sigma}|}{|\bm{\Sigma}|}\exp\big(\!-\!\bm{\xi}_n^H\bm{\Omega}_n\bm{\xi}_n \big).
		\end{aligned}
	\end{equation}
}
Notice that \eqref{eq: theta} can be rewritten as $\theta_n^{t} = (1+\frac{1-\epsilon_n}{\epsilon_n}\ell_n^{t})^{-1}$.

With a large number  of antennas, $M_\textup{tot}$, a naive implementation of the AMP algorithm has two major drawbacks: 1) calculating the determinants and inverting  the $M_\textup{tot}\times M_\textup{tot}$ matrices in \eqref{eq: theta}, \eqref{eq: Psi} and \eqref{eq: Omega} can be computationally demanding; 2) sending the $L\times M_\textup{tot}$-dimensional matrix $\mathbf{Y}$ requires high fronthaul capacity.

\subsection{Covariance Structure in the AMP State Evolution}

The received signal model in distributed MIMO, see  \eqref{eq: system model}, has a special property: the covariance matrices $\{\mathbf{R}_n\}$ are block-diagonal. In the following theorem, we show that during the state evolution in AMP, the states maintain the same block-diagonal structure during all iterations.

\begin{theorem}
\label{th1}
    Assume that $\{\mathbf{R}_n\}$ have a block-diagonal structure: $\mathbf{R}_n=\bd(\mathbf{R}_{1n},\cdots,\mathbf{R}_{Kn})$. By using the MMSE denoiser in \eqref{eq: MMSE denoiser}, the state $\bm{\Sigma}^{t}$ in the state evolution \eqref{eq: state evolution} stays as a block-diagonal matrix with the same structure for each block, i.e., $\bm{\Sigma}^{t} = \bd(\bm{\Sigma}^{t}_1,\cdots,\bm{\Sigma}^{t}_K)$, for all $t$.
\end{theorem}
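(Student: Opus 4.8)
The plan is to prove the claim by induction on the iteration index $t$. For the base case, note from \eqref{eq: initial state} that $\bm{\Sigma}^0 = \sigma^2\mathbf{I} + \frac{1}{L}\sum_{n\in\mathcal{N}}\mathbf{R}_n$ is a sum of block-diagonal matrices (the scaled identity being trivially block-diagonal under the prescribed partition), hence block-diagonal. For the inductive step, I assume $\bm{\Sigma}^t = \bd(\bm{\Sigma}_1^t,\ldots,\bm{\Sigma}_K^t)$ and aim to show that every summand in the state evolution \eqref{eq: state evolution} shares this block-diagonal structure, which then carries over to $\bm{\Sigma}^{t+1}$ after adding $\sigma^2\mathbf{I}$.

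First I would record the algebraic consequences of the hypothesis. Since $\mathbf{R}_n$ and $\bm{\Sigma}^t$ are both block-diagonal, so is $\mathbf{R}_n+\bm{\Sigma}^t$, and the inverse of a block-diagonal matrix is block-diagonal with the same partition; hence $\bm{\Psi}_n^t$ in \eqref{eq: Psi} and $\bm{\Omega}_n^t$ in \eqref{eq: Omega} are block-diagonal, with diagonal blocks $\bm{\Psi}_{kn}^t$ and $\bm{\Omega}_{kn}^t$. Writing $\mathbf{u}_n=\mathbf{x}_n+\mathbf{v}^t$ and partitioning it as $\mathbf{u}_{kn}=a_n\mathbf{h}_{kn}+\mathbf{v}_k^t$, the MMSE denoiser in \eqref{eq: MMSE denoiser} then has $k$-th block $\theta_n^t(\mathbf{u}_n)\,\bm{\Psi}_{kn}^t\mathbf{u}_{kn}$: the linear map touches only block $k$, while the scalar gain $\theta_n^t$ couples all blocks solely through the quadratic form $\mathbf{u}_n^H\bm{\Omega}_n^t\mathbf{u}_n=\sum_k\mathbf{u}_{kn}^H\bm{\Omega}_{kn}^t\mathbf{u}_{kn}$.

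The crux is to show that the off-diagonal $(k,k')$ block of the expectation in \eqref{eq: state evolution} vanishes for $k\neq k'$. Because $\mathbf{R}_n$ and $\bm{\Sigma}^t$ are block-diagonal, the pairs $(\mathbf{h}_{kn},\mathbf{v}_k^t)$ are mutually independent across $k$ for each value of $a_n$. The argument I expect to carry the proof is a circular-symmetry argument: consider rotating the $k'$-th block by an arbitrary phase, $(\mathbf{h}_{k'n},\mathbf{v}_{k'}^t)\mapsto(e^{j\varphi}\mathbf{h}_{k'n},e^{j\varphi}\mathbf{v}_{k'}^t)$. This leaves the joint distribution invariant (both are jointly circularly symmetric complex Gaussians) and, since the quadratic form $\mathbf{u}_{k'n}^H\bm{\Omega}_{k'n}^t\mathbf{u}_{k'n}$ is phase-invariant, it leaves $\theta_n^t$ unchanged as well. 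The error vector $\theta_n^t\bm{\Psi}_{kn}^t\mathbf{u}_{kn}-a_n\mathbf{h}_{kn}$ in block $k$ is then untouched, whereas its counterpart in block $k'$ is multiplied by $e^{j\varphi}$, so the $(k,k')$ block of the integrand is scaled by $e^{-j\varphi}$. Invariance of the distribution forces this block to equal $e^{-j\varphi}$ times itself for every $\varphi$, hence to be zero; the same conclusion holds for both $a_n=0$ and $a_n=1$.

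The main obstacle is precisely this coupling through $\theta_n^t$: unlike the linear map $\bm{\Psi}_n^t$, the scalar gain depends on all blocks simultaneously, so the expectation does not factor across blocks and a naive independence argument fails. The circular-symmetry observation is what resolves this, because $\theta_n^t$ depends on each block only through a phase-invariant quadratic form. With the off-diagonal blocks eliminated for every $n\in\mathcal{N}$, summing over $n$ preserves block-diagonality, and adding $\sigma^2\mathbf{I}$ leaves $\bm{\Sigma}^{t+1}$ block-diagonal, completing the induction.
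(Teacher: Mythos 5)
Your proof is correct, and it reaches the conclusion by a genuinely different mechanism than the paper, although both are at heart symmetry arguments built on the same two observations (blocks are independent under block-diagonal covariances, and the scalar gain $\theta_n^t$ sees each block only through a quadratic form). The paper first expands the error covariance $\mathbb{E}\big[(\mathbf{g}(\mathbf{x}+\mathbf{v})-\mathbf{x})(\mathbf{g}(\mathbf{x}+\mathbf{v})-\mathbf{x})^H\big]$ into four terms, writes the first as $\bm{\Psi}\mathbf{Q}\bm{\Psi}^H$ with $\mathbf{Q}=\mathbb{E}\big[\theta^2(\mathbf{x}+\mathbf{v})(\mathbf{x}+\mathbf{v})^H\big]$, and kills the cross-block entries of $\mathbf{Q}$ one scalar entry at a time via a ``partially odd/even function'' argument: flipping the sign of one block leaves the densities and $\theta^2$ invariant but negates the integrand when exactly one of the two indices lies in that block, so the integral vanishes; the remaining three terms are asserted to follow by similar arguments, with details omitted. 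Your argument replaces the discrete sign flip by a continuous phase rotation $e^{j\varphi}$ of a single block (the sign flip is your special case $\varphi=\pi$), and, crucially, you apply it directly to the whole error outer product rather than to a four-term decomposition: conditioning on $a_n$, the off-diagonal $(k,k')$ block of the expectation satisfies $E=e^{-j\varphi}E$ for all $\varphi$, hence $E=\mathbf{0}$. What your route buys is economy and completeness---one invariance disposes of all four terms simultaneously, including the cross terms $\mathbb{E}[\mathbf{g}\,\mathbf{x}^H]$ whose treatment the paper leaves to the reader---and it makes explicit where the difficulty sits (the coupling of blocks through $\theta_n^t$) and why it is harmless. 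What the paper's route buys is an elementary, entry-wise real-integration flavor (odd integrand $\Rightarrow$ zero integral) that avoids invoking invariance of the expectation under a change of variables, at the cost of a case split over terms and indices.
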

\begin{proof}
    See Appendix \ref{proof1}.
\end{proof}

According to Theorem \ref{th1}, the inversion of the $M_\textup{tot}\times M_\textup{tot}$ matrices in \eqref{eq: Psi} and \eqref{eq: Omega} can be performed by inverting  their diagonal blocks, which are of  dimension $M\times M$.\footnote{For simplicity, we assume that all APs have the same number of antennas. The algorithm, however, can be easily modified to support arbitrary numbers of antennas.}

% \subsubsection*{Special Case-i.i.d. Rayleigh Fading}

When the channel vector from device $n$ to AP $k$ is modeled by i.i.d. Rayleigh fading, the channel covariance matrix becomes $\tilde{\mathbf{R}}_{kn}=\beta_{kn}\mathbf{I}_M$. Correspondingly, the effective channel $\mathbf{h}_n$ from device $n$ to all APs has the distribution $\mathcal{CN}(\mathbf{0},\mathbf{R}_n)$ with $\mathbf{R}_n=\bd(\rho_{1n}\mathbf{I}_M,\cdots,\rho_{Kn}\mathbf{I}_M)$. The following corollary can be viewed as a generalization of~\cite[Theorem 1]{liu2018massive} to the scenario of distributed MIMO.

\begin{corollary}
\label{th2}
    Assume that $\{\mathbf{R}_n\}$ have the diagonal structure $\mathbf{R}_n=\bd(\rho_{1n}\mathbf{I},\cdots,\rho_{Kn}\mathbf{I})$. By using the MMSE denoiser in \eqref{eq: MMSE denoiser}, the state $\bm{\Sigma}^{t}$ stays as a scaled identity matrix for each diagonal block, i.e., $\bm{\Sigma}^{t} = \bd(\tau_1^{t}\mathbf{I},\cdots,\tau_K^{t}\mathbf{I})$, for all $t$.
\end{corollary}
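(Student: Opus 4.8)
The plan is to argue by induction on the iteration index $t$, invoking Theorem~\ref{th1} to supply the block-diagonal structure of $\bm{\Sigma}^t$ for free, so that the only additional thing to establish is that each diagonal block is a scalar multiple of $\mathbf{I}$. For the base case, substituting $\mathbf{R}_n=\bd(\rho_{1n}\mathbf{I},\cdots,\rho_{Kn}\mathbf{I})$ into the initial state \eqref{eq: initial state} gives $\bm{\Sigma}^0=\bd(\tau_1^0\mathbf{I},\cdots,\tau_K^0\mathbf{I})$ with $\tau_k^0=\sigma^2+\frac{1}{L}\sum_{n}\rho_{kn}$, since a sum of block-diagonal scaled-identity matrices retains that form.

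For the inductive step I would assume $\bm{\Sigma}^t=\bd(\tau_1^t\mathbf{I},\cdots,\tau_K^t\mathbf{I})$ and first record that every derived quantity inherits scaled-identity blocks: $\mathbf{R}_n+\bm{\Sigma}^t$ and its inverse, the gain $\bm{\Psi}_n^t$ in \eqref{eq: Psi}, and $\bm{\Omega}_n^t=\bd(\omega_{1n}^t\mathbf{I},\cdots,\omega_{Kn}^t\mathbf{I})$ in \eqref{eq: Omega} are all block-diagonal with each block proportional to $\mathbf{I}$. Consequently the quadratic form inside the scalar gate $\theta_n^t$ in \eqref{eq: theta} decouples as $\bm{\xi}_n^H\bm{\Omega}_n^t\bm{\xi}_n=\sum_k\omega_{kn}^t\|\bm{\xi}_{kn}\|^2$, where $\bm{\xi}_{kn}$ denotes the $k$-th block of $\bm{\xi}_n$.

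The heart of the argument is a blockwise rotational-invariance symmetry. Fixing a block index $k$ and an arbitrary $M\times M$ unitary $\mathbf{U}_k$, I would observe that within block $k$ the signal $\mathbf{x}_{kn}=a_n\mathbf{h}_{kn}$ (equal to $\mathbf{0}$ or to an isotropic $\mathcal{CN}(\mathbf{0},\rho_{kn}\mathbf{I})$ draw according to $a_n$) and the noise block $\mathbf{v}_k^t\sim\mathcal{CN}(\mathbf{0},\tau_k^t\mathbf{I})$ are both rotation-invariant in distribution, so the joint law of $(\mathbf{x}_n,\mathbf{v}^t)$ is unchanged when $\mathbf{x}_{kn}$ and $\mathbf{v}_k^t$ are both rotated by $\mathbf{U}_k$. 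Under this rotation $\bm{\xi}_{kn}\mapsto\mathbf{U}_k\bm{\xi}_{kn}$, the gate $\theta_n^t$ is invariant because it sees block $k$ only through $\|\bm{\xi}_{kn}\|^2$, while the $k$-th block of the denoiser \eqref{eq: MMSE denoiser}, namely $\theta_n^t\bm{\Psi}_{kn}^t\bm{\xi}_{kn}$ with $\bm{\Psi}_{kn}^t$ a scalar, transforms to $\mathbf{U}_k$ times itself. Hence the $k$-th block of the denoising error $\mathbf{g}_t(\mathbf{x}_n+\mathbf{v}^t)-\mathbf{x}_n$ is simply rotated by $\mathbf{U}_k$, so the $k$-th diagonal block $\mathbf{E}_k$ of the error-covariance sum in \eqref{eq: state evolution} satisfies $\mathbf{E}_k=\mathbf{U}_k\mathbf{E}_k\mathbf{U}_k^H$ for every unitary $\mathbf{U}_k$; this commutation with the full unitary group forces $\mathbf{E}_k=c_k\mathbf{I}$, and therefore $\bm{\Sigma}_k^{t+1}=\sigma^2\mathbf{I}+\mathbf{E}_k=\tau_k^{t+1}\mathbf{I}$, closing the induction.

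The step I expect to require the most care is justifying the symmetry despite the coupling across blocks: the blocks of $\mathbf{x}_n$ are not independent because they share the common activity $a_n$, and the gate $\theta_n^t$ mixes all blocks through a single quadratic form. The resolution is that this coupling touches block $k$ only through the rotation-invariant scalar $\|\bm{\xi}_{kn}\|^2$ and through the shared $a_n$, neither of which is affected by $\mathbf{U}_k$; conditioning on $a_n$ and on the remaining blocks leaves a genuinely isotropic problem in block $k$, so the invariance identity $\mathbf{E}_k=\mathbf{U}_k\mathbf{E}_k\mathbf{U}_k^H$ holds blockwise as claimed.
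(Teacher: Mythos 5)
Your proof is correct, but it takes a genuinely different route from the paper's. The paper's proof is two sentences: since $\mathbf{R}_n=\bd(\rho_{1n}\mathbf{I},\cdots,\rho_{Kn}\mathbf{I})$ is in fact a \emph{diagonal} matrix, the paper applies Theorem~\ref{th1} with the block size set to one, concluding that $\bm{\Sigma}^t$ stays diagonal; it then appeals to symmetry (exchangeability of the $M$ antennas within one AP, whose signal and noise statistics are identical) to conclude that the diagonal entries belonging to the same AP are equal. You instead keep the blocks at size $M\times M$, use Theorem~\ref{th1} only for the block-diagonal structure, and prove the scalar-block property via a blockwise unitary-invariance argument: the joint law of $(\mathbf{x}_n,\mathbf{v}^t)$ is invariant under rotating block $k$ by any unitary $\mathbf{U}_k$, the gate $\theta_n^t$ sees block $k$ only through $\|\bm{\xi}_{kn}\|^2$, so the error-covariance block satisfies $\mathbf{E}_k=\mathbf{U}_k\mathbf{E}_k\mathbf{U}_k^H$ for all $\mathbf{U}_k$ and must be $c_k\mathbf{I}$ by Schur's lemma. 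The paper's route buys brevity: it recycles the theorem at the finest granularity so the only new ingredient is permutation symmetry. Your route buys self-containedness and is in a sense stronger: the same rotational-invariance argument with $\mathbf{U}_k=-\mathbf{I}$ applied to a single block would also kill the off-diagonal blocks, so your invocation of Theorem~\ref{th1} is a convenience rather than a necessity; moreover, you make explicit the point the paper buries in the word ``symmetry,'' namely that the cross-block coupling through the shared activity $a_n$ and the scalar gate $\theta_n^t$ does not break the blockwise invariance.
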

\begin{proof}
    By setting the size of the diagonal blocks in Theorem~\ref{th1} to one, we conclude that the state $\bm{\Sigma}^{t}$ stays as a diagonal matrix. Then, by using the symmetry, we conclude that the elements corresponding to the same AP are equal.
\end{proof}
In the i.i.d. Rayleigh case, the calculations of all matrix inversions and determinants simplify to scalar operations.

\subsection{Distributed Activity Detection}

Since by Theorem \ref{th1}, $ \bm{\Sigma} $ and $\bm{\Omega}_n$ are both block-diagonal, we can rewrite the likelihood-ratio in \eqref{eq: lr} as
\begin{equation}
\begin{aligned}
\label{eq: factorized lr}
    \ell_n 
    &= \prod_{k\in\mathcal{K}}\underbrace{\frac{|\mathbf{R}_{kn}\! +\!  \bm{\Sigma}_k|}{|\bm{\Sigma}_k|}\exp\big(\!-\!\bm{\xi}_{kn}^H\bm{\Omega}_{kn}\bm{\xi}_{kn} \big)}_{\triangleq \ell_{kn}}.
\end{aligned}
\end{equation}
Equivalently, the LLR is
\begin{equation}
\label{eq: factorized llr}
    \log \ell_n = \sum_{k\in\mathcal{K}} \log \ell_{kn},
\end{equation}
where
\begin{equation}
    \log\ell_{kn} = \log \frac{|\mathbf{R}_{kn}\! +\!  \bm{\Sigma}_k|}{|\bm{\Sigma}_k|} - \bm{\xi}_{kn}^H\bm{\Omega}_{kn}\bm{\xi}_{kn}.
\end{equation}
Here, $\bm{\Sigma}_k$ and $\bm{\Omega}_{kn}$ are the $k$-th diagonal blocks of $\bm{\Sigma}$ and $\bm{\Omega}_n$, respectively, and $\bm{\xi}_{kn}$ is the corresponding subvector of $\bm{\xi}_n$.
This means that the LLR $\log\ell_n$ can be written as the sum of $\{\log\ell_{kn} \}$, which can be interpreted as the \textit{local} LLRs after \textit{coherently} processing the received signals at each AP.

In the special case of i.i.d. Rayleigh fading, the LLR can be further simplified into
\begin{equation}
    \log\ell_{kn} = M\log\left(1\!+\!\frac{\rho_{kn}}{\tau_k}\right) - \frac{\rho_{kn}\|\bm{\xi}_{kn}\|^2}{\tau_k(\rho_{kn}\!+\!\tau_k)},
\end{equation}
where the quantity $\rho_{kn}/\tau_k$ can be interpreted as the signal-to-noise ratio (SNR).

Inspired by the factorization in \eqref{eq: factorized lr}, we propose a distributed approach to activity detection in distributed MIMO. The procedure is as follows: each AP runs the AMP algorithm locally by using only the received signal $\mathbf{Y}_k$ and sends the local LLR $\log\hat{\ell}_{kn}$ to the aggregator.
% \footnote{Various aggregation schemes can be considered. However, since this is not a focus of this paper, we assume that the CPU is the aggregator.} 
Then, the aggregator computes the LLR $\log\hat{\ell}_n = \sum_{k\in\mathcal{K}}\log\hat{\ell}_{kn}$ for activity detection.

\begin{algorithm}[t]
	\caption{distributed AMP (dAMP)}
	\begin{algorithmic}[1]
		\label{alg: dAMP}
		\REQUIRE $\bm{\Phi}$,$\{\mathbf{Y}_k\}$,$\{\mathbf{R}_{kn}\}$
		\ENSURE {\small$\mathbf{Z}_k^{0}=\mathbf{Y}_k$, $\hat{\mathbf{x}}_{kn}^t=\mathbf{0}$, and $\bm{\Sigma}_k^{0}=\frac{1}{L}\mathbf{Y}_k^T\mathbf{Y}_k^*$, $\forall k,\forall n$}
		\FOR{each $k\in\mathcal{K}$, independently}
		\FOR{$t=0,1,\cdots$}
		\FOR{each $ n\in\mathcal{N}_k^\textup{d} $}
		\STATE $\bm{\xi}_{kn}^{t} = (\mathbf{Z}_k^{t})^T\bm{\phi}_n^* + \mathbf{x}_{kn}^{t}$ \label{step1}
		\STATE $\bm{\Psi}_{kn}^{t} = \mathbf{R}_{kn}(\mathbf{R}_{kn} + \bm{\Sigma}_k^{t})^{-1}$ \label{line1}
		\STATE $\bm{\Omega}_{kn}^{t} = (\bm{\Sigma}^{t}_k)^{-1} - (\mathbf{R}_{kn} + \bm{\Sigma}_k^{t})^{-1}$ 
		\STATE $\ell_{kn}^{t} = \frac{|\mathbf{R}_{kn}\! +\!  \bm{\Sigma}_k^{t}|}{|\bm{\Sigma}_k^{t}|}\exp\big(\!-\!(\bm{\xi}_{kn}^{t})^H\bm{\Omega}_{kn}^{t}\bm{\xi}_{kn}^{t} \big)$ \label{line2}
		\STATE $\theta_{kn}^{t} = \big(1+\frac{1-\epsilon_n}{\epsilon_n}{\ell_{kn}^{t}} \big)^{-1}$
		\STATE $\hat{\mathbf{x}}_{kn}^{t} = \theta_{kn}^{t}\bm{\Psi}_{kn}^{t}\bm{\xi}_{kn}^{t}$ \label{step2}
		\ENDFOR
		\STATE {\small$\mathbf{U}_k^{t} = \frac{1}{N}\sum_{n\in\mathcal{N}_k^\textup{d}}\theta_{kn}^{t}\bm{\Psi}_{kn}^{t}(\mathbf{I}\!+\!(1\!-\!\theta_{kn}^{t})\bm{\xi}_{kn}^{t}(\bm{\xi}_{kn}^{t})^H\bm{\Omega}_{kn}^{t})$}
		\STATE $\mathbf{Z}_k^{t+1} =  \mathbf{Y}_k - \sum_{n\in\mathcal{N}_k^\textup{d}}\bm{\phi}_n(\mathbf{x}_{kn}^{t})^T + \frac{N}{L}\mathbf{Z}_k^{t}\mathbf{U}_k^{t}$ \label{step3}
		\STATE $\bm{\Sigma}_k^{t+1} = \frac{1}{L}(\mathbf{Z}_k^{t+1})^T(\mathbf{Z}_k^{t+1})^*$ \label{step4}
		\ENDFOR
		\ENDFOR
	\end{algorithmic}
\end{algorithm}

%\begin{algorithm}[t]
%	\caption{centralized AMP (cAMP)}
%	\begin{algorithmic}[1]
	%		\label{alg: cAMP}
	%		\REQUIRE $\bm{\Phi}$,$\{\mathbf{Y}_k\}$,$\{\mathbf{R}_{kn}\}$
	%		\ENSURE {\small$\mathbf{Z}_k^{0}=\mathbf{Y}_k$, $\hat{\mathbf{x}}_{kn}^t=\mathbf{0}$, and $\bm{\Sigma}_k^{0}=\frac{1}{L}\mathbf{Y}_k^T\mathbf{Y}_k^*$, $\forall k,\forall n$}
	%		\FOR{$t=0,1,\cdots$}
	%		\FOR{each $ n\in\mathcal{N} $}
	%		\FOR{each $k\in\mathcal{K}_n^\textup{d}$}
	%		\STATE calculate $\bm{\xi}_{kn}^{t}$, $\bm{\Psi}_{kn}^{t}$, $\bm{\Omega}_{kn}^{t}$, $\ell_{kn}^{t}$
	%		\ENDFOR
	%		\STATE $\theta_{n}^{t} = \big(1+\frac{1-\epsilon_n}{\epsilon_n}{\prod_{k\in\mathcal{K}_n^\textup{d}}\ell_{kn}^{t}} \big)^{-1}$
	%		\FOR{each $k\in\mathcal{K}_n^\textup{d}$}
	%		\STATE calculate $\hat{\mathbf{x}}_{kn}^{t},\mathbf{U}_k^{t}$, $\mathbf{Z}_k^{t+1}$, $\bm{\Sigma}_k^{t+1}$ using $ \theta_{kn}^t=\theta_n^t $
	%		\ENDFOR
	%		\ENDFOR
	%		\ENDFOR
	%	\end{algorithmic}
%\end{algorithm}

\begin{figure*}[h]
	\centering
	\begin{subfigure}[b]{0.49\textwidth}
		\centering
		\includegraphics[width=\textwidth]{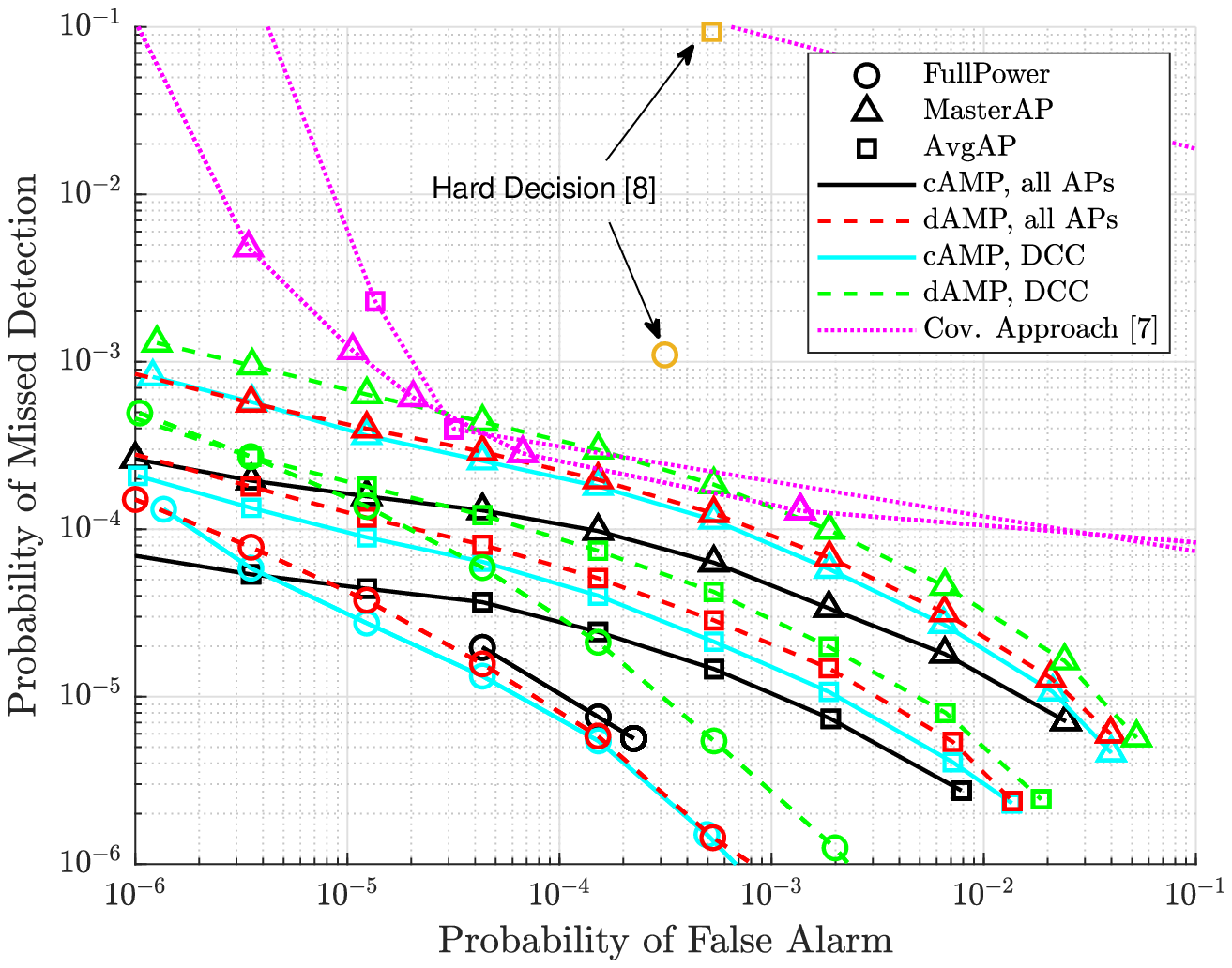}
		\caption{Pilot length $L=40$}
		\label{fig: result L=40}
	\end{subfigure}
	\hfill
	\begin{subfigure}[b]{0.49\textwidth}
		\centering
		\includegraphics[width=\textwidth]{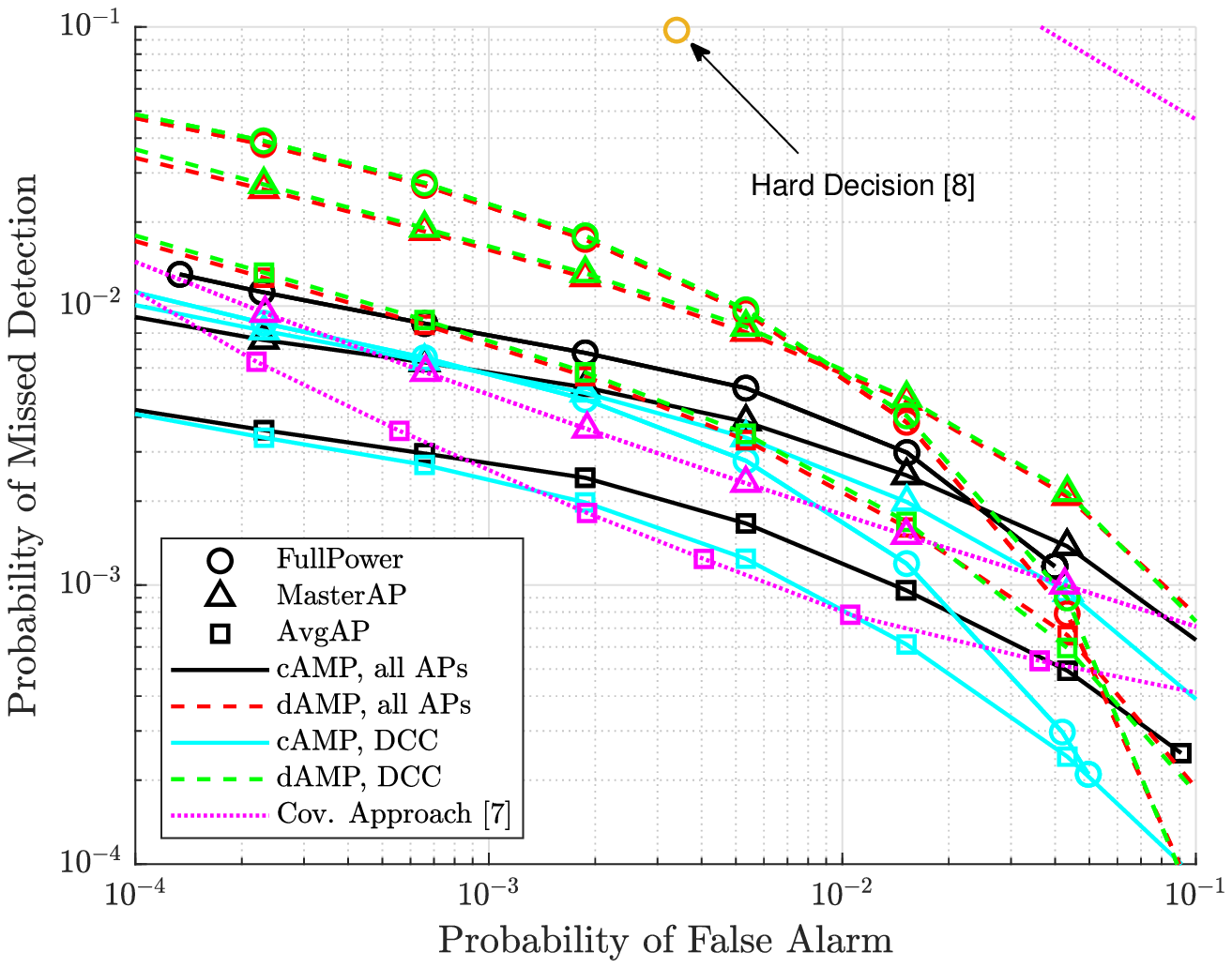}
		\caption{Pilot length $L=20$}
		\label{fig: result L=20}
	\end{subfigure}
	\caption{Performance of the cAMP, dAMP, and baseline algorithms  with different power allocation schemes.}
	\label{fig: result}
\end{figure*}

\subsection{Dynamic Cooperation Clustering}

We assumed that each device was served by all APs. This configuration is not scalable in complexity and resource requirements as $N\rightarrow\infty$. 
Meanwhile, the AMP algorithm, or more generally, CS techniques, are known to work in the regime where the measurement size (pilot length) is larger than or equal to the support size (number of active devices).

To address these issues, we consider a dynamic cooperation clustering (DCC) framework, such that a device is served only by the APs with indices in the set $\mathcal{K}_n^\textup{d}\subset\mathcal{K}$.
% \footnote{Notice that $\mathcal{K}_n^\textup{d}$ can be different from  $\overline{\mathcal{K}}_n^\textup{p}$ in \eqref{eq: clustering PA} used for power allocation. Indeed, as we observed from the simulation results, using $\overline{\mathcal{K}}_n^\textup{p}$ for DCC leads to bad performance, since some devices are served only by very few APs.}
Conversely, an AP only serves a subset of devices $\mathcal{N}_k^\textup{d}=\{n\in\mathcal{N}:k\in \mathcal{K}_n^\textup{d}\}$. There are two advantages of using the DCC framework: 1) the computational complexity is reduced; 2) the effective number of active devices served by an AP decreases.

Finally, by exploiting the DCC framework and our new findings about the MMSE denoiser, we propose a distributed AMP (dAMP) which is detailed in Algorithm \ref{alg: dAMP}. A centralized AMP (cAMP) is also developed in a similar way, while the step-wise details are omitted owing to  space constraints. The key distinction in cAMP is that the denoiser for device $ n $ is designed using
$
	\theta_{n}^{t} = \big(1+\frac{1-\epsilon_n}{\epsilon_n}{\prod_{k\in\mathcal{K}_n^\textup{d}}\ell_{kn}^{t}} \big)^{-1}
$
by combining the local LLRs from all its serving APs in each iteration.
These algorithms can be modified for other network structures. For example, multiple neighboring APs can coherently process the received signals. In this respect, cAMP (fully coherent) and dAMP (noncoherent) represent two extreme cases.

\subsection{Complexity Analysis}

The computational complexity of dAMP with correlated fading  is dominated by the calculation of matrix inversions and determinants in steps \ref{line1}-\ref{line2} of Algorithm \ref{alg: dAMP} with complexity $ O(M^3) $. 
Therefore, the overall complexity is $ O(KTNM^{3}) $. For the i.i.d. Rayleigh case, the complexity of matrix-vector multiplications in steps \ref{step1}, \ref{line2}, and \ref{step2} is $ O(M^2) $, and the matrix multiplications in steps \ref{step3} and \ref{step4} have complexity $ O(LM^2) $. Since we are interested in the regime where $ L\ll N $, the overall complexity becomes $ O(KTNM^2) $. Notice that dAMP can be distributed, and the processing per  AP has complexity $ O(TNM^2) $. Furthermore, by using the DCC framework, we can replace $ N $ by $ \max_{k}|\mathcal{N}_k^\textup{d}| $. 

For comparison, the covariance-based method in~\cite{ganesan2021clustering} has overall complexity $ O(TN(K_\textup{dom}^3 + KL^2)) $, where $ K_\textup{dom} $ is the number of dominants APs; for the typical case $ K_\textup{dom}<L $, the complexity becomes $ O(KTNL^2) $. Note, however, that method of \cite{ganesan2021clustering}  is developed  for the i.i.d. Rayleigh case and while extensions are possible, they are likely to incur higher complexity.  
Since the number of antennas is typically small on an AP, we have $ M<L $, and our algorithms have lower complexity than that of \cite{ganesan2021clustering}.

\begin{table}[t]
	\centering
	\begin{tabular}{|c|c|c|c|c|c|}
		\hline
		\multirow{2}{*} & \multicolumn{2}{c|}{\textbf{cAMP}} & \multicolumn{2}{c|}{\textbf{dAMP}} & \multirow{2}{*}{\textbf{Cov. approach}}  \\
		\cline{2-5}
		& all APs & DCC & all APs & DCC & \\
		\hline
		$L=40$ & $0.69$ & $0.28$ & $0.38$ & $0.21$ & $2.33$ \\
		\hline
		$L=20$ & $0.70$ & $0.29$ & $0.37$ & $0.21$ & $1.34$ \\
		\hline
	\end{tabular}
	\caption{Runtime comparison in seconds.}
	\label{tab: runtime}
\end{table}

\section{Simulations}

We consider a distributed MIMO system with $K=20$ APs with $M=3$ antennas each. A total of $N=400$ devices are randomly dropped in a $2~\text{km}\times2~\text{km}$  squared area with activity probability $\epsilon_n=0.1,\forall n $. By using a wrap-around
technique, we approximate an infinitely large network with $15$ antennas and $10$ active devices per square km. The pilots are random Gaussian sequences normalized to unit energy. The maximum transmit power is $23$~dBm. The bandwidth is $1$~MHz. The noise power spectral density is $-169$~dBm/Hz. The LSFC is generated by $-140.6 - 36.7\log_{10}(d_n) +  \Upsilon_i$ in dB, where $d_n$ is the distance from device $n$ to the AP in km, and $\Upsilon_n$ is the shadow fading effect distributed as $\mathcal{N}(0,\sigma_\textup{sf}^2)$, with standard deviation $\sigma_\textup{sf}=4$~dB. The small-scale fading is modeled by i.i.d. Rayleigh for each pair of AP and device.
The LSFC threshold for power allocation is set to satisfy $p_{\max}\beta_n^\textup{th}=6$~dB, $\forall n$. For the DCC framework, we connect each device to the $10$ APs with the largest LSFC.

The performances of cAMP and dAMP are examined in Fig.~\ref{fig: result} with or without the DCC framework and with different power allocation schemes.\footnote{Code available at https://github.com/jiananbai/distributed-AMP.} The covariance-based approach in~\cite{ganesan2021clustering} (with $3$ dominant APs) and the hard-decision-and-fusion based AMP method\footnote{Since \cite{guo2021joint} provided neither theoretical results nor algorithm details for the multi-antenna AP case, we use the expressions of probabilities of missed detection and false alarm in~\cite{liu2018massive} to perform the decision fusion. Notice that this method uses a minimum-probability-of-error criterion and does not produce receiver operating characteristic (ROC) curves.} in~\cite{guo2021joint} are used as baselines for comparison. A runtime comparison is provided in Table~\ref{tab: runtime}.\footnote{The simulations were performed on an Intel Xeon Gold 6130 Processor.}

The results for pilot length $L=40$ are shown in Fig. \ref{fig: result L=40}. The following observations can be made: (i) When the pilot length is larger than or equal to the average number of active devices, AMP outperforms the covariance-based approach in almost all configurations since our AMP algorithms can  coherently process received signals from more APs. (ii) AMP works better with full power. We hypothesize that this is because of the macro-diversity in distributed MIMO: for each device there are almost always some APs to which the  path gain is high. 
Thus, although using full power usually works poorly for activity detection in co-located MIMO, it can be an option in distributed MIMO where it is difficult to obtain an explicit objective for optimizing the power allocation. (iii) There is a performance gap between cAMP and dAMP due to the lack of coherent processing across different APs for dAMP.

In Fig. \ref{fig: result L=20}, the results are reproduced for pilot length $L=20$, which is not a working regime for AMP in the co-located case. 
We observe: 1) the performance loss of AMP is more significant than for the covariance-based approach since the AMP is inherently restricted to scenarios where $ L\geq\sum a_n $, while the covariance-based approach has a better scaling law~\cite{fengler2021non}; 
2) \textit{AvgAP} becomes a better power allocation scheme, potentially due to its better control of interference power under  increased pilot contamination; 
3) DCC performs better than using all APs, since an AP can ignore the devices with bad channel conditions; this is particularly helpful when 
$L$ is small relative to the average number of active users.

\section{Conclusion}

We showed that for activity detection in distributed MIMO: 1) the AMP algorithm can be implemented in a distributed manner with an acceptable performance loss; 2) the AMP algorithm outperforms the covariance-based approach when the pilot length is larger than or equal to the average number of active devices, although this is not the case in co-located MIMO; 3) the problem of pilot correlation can be alleviated by using the DCC framework, when the pilot length is less than the average number of active devices.

\appendices
%\section{Proof of Theorem \ref{th1}}
\section{}
\label{proof1}

We prove Theorem \ref{th1} by induction. First, when the covariance matrices $\{\mathbf{R}_n\}$ share a block-diagonal structure, the initial state $\bm{\Sigma}^{0}$ in \eqref{eq: initial state} has the same block-diagonal structure.
Then, assuming that $\bm{\Sigma}^{t}$ stays in this structure, we show that $\bm{\Sigma}^{t+1}$ has the same structure.

\begin{definition}
    {(Partially Odd or Even Function)} A function $f:\mathbb{R}^M\rightarrow\mathbb{R}$ is partially odd or even in indices $\mathcal{I}\subset\mathcal{M}=\{1,\cdots,M \}$ if $f(\eta_{\mathcal{I}}(\mathbf{x}))=-f(\mathbf{x})$ or $f(\eta_{\mathcal{I}}(\mathbf{x}))=f(\mathbf{x})$, respectively. Here, $\eta_\mathcal{I}(\cdot)$ is an element-wise operator with $[\eta_\mathcal{I}(\mathbf{x})]_i$ equals to $-x_i$ for $i\in\mathcal{I}$, and $x_i$ otherwise.
\end{definition}

An arbitrary expectation term in the summand of the second term in the state evolution \eqref{eq: state evolution} can be written as
\begin{equation}
\label{eq: four terms}
\begin{aligned}
    & \mathbb{E}\Big[\big(\mathbf{g}(\mathbf{x}\! +\! \mathbf{v},\bm{\Sigma})-\mathbf{x}\big)\big(\mathbf{g}(\mathbf{x}\! +\! \mathbf{v},\bm{\Sigma})-\mathbf{x}\big)^H \Big]\\
    &= \mathbb{E}\Big[\mathbf{g}(\mathbf{x}\! +\! \mathbf{v},\bm{\Sigma})\big(\mathbf{g}(\mathbf{x}\! +\! \mathbf{v},\bm{\Sigma})\big)^H \Big] + \mathbb{E}\big[\mathbf{x}\mathbf{x}^H \big]\\
    & ~~~ -\mathbb{E}\big[\mathbf{g}(\mathbf{x}\! +\! \mathbf{v},\bm{\Sigma})\mathbf{x}^H\big] - \mathbb{E}\Big[\mathbf{x}\big(\mathbf{g}(\mathbf{x}\! +\! \mathbf{v},\bm{\Sigma})\big)^H\Big].
\end{aligned}
\end{equation}
According to \eqref{eq: MMSE denoiser}, the first term equals
\begin{equation}
    \bm{\Psi}\underbrace{\mathbb{E}\big[\theta(\mathbf{x}+\mathbf{v})^2(\mathbf{x}+\mathbf{v})(\mathbf{x}+\mathbf{v})^H \big]}_{\triangleq \mathbf{Q}}\bm{\Psi}^H,
\end{equation}
where $\theta(\cdot)$ is defined in \eqref{eq: theta}. By denoting as $p_x(\mathbf{x})$ and $p_v(\mathbf{v})$ the density functions of $\mathbf{x}$ and $\mathbf{v}$, respectively, the $(i,j)$-th element of $\mathbf{Q}$ is given by
\begin{equation}
    [\mathbf{Q}]_{i,j} = \int_{\mathbf{x},\mathbf{v}} \underbrace{(x_i\!+\!v_i)(x_j\!+\!v_j)^*\theta(\mathbf{x}\!+\!\mathbf{v})^2p_x(\mathbf{x})p_v(\mathbf{v})}_{\triangleq f_{i,j}(\mathbf{x},\mathbf{y})}.
\end{equation}
Denote by $\mathcal{M}_k$ the row (column) indices corresponding to the $k$-th diagonal block. Then $f_{i,j}(\mathbf{x},\mathbf{y})$ is partially odd in $\mathcal{M}_k$ if $i\in\mathcal{M}_k$ and $j\notin\mathcal{M}_k$, or $i\notin\mathcal{M}_k$ and $j\in\mathcal{M}_k$, and partially even in $\mathcal{M}_k$ otherwise. That is, $[\mathbf{Q}]_{i,j}=0$ if the indices $i$ and $j$ are not in the same diagonal block. This means that $\mathbf{Q}$ is also block-diagonal with the same structure as $\{\mathbf{R}_n \}$. Then, the first term in \eqref{eq: four terms}, which equals to $\bm{\Psi}\mathbf{Q}\bm{\Psi}^H$, keeps the same block-diagonal structure.

By using similar arguments, one can show that the remaining terms have the same structure. 
We omit the details due to the limited space.

\bibliographystyle{IEEEtran}
\bibliography{Bai_WCL2022-1199_ref}

\end{document}